\documentclass[a4paper,11pt]{article}

\usepackage[utf8]{inputenc}
\usepackage[T1]{fontenc}
\usepackage[english]{babel}

\usepackage{microtype}

\bibliographystyle{plain}

\usepackage[pdftex]{graphicx}
\usepackage[usenames,dvipsnames]{xcolor}

\usepackage{booktabs}

\usepackage{tikz}
\usetikzlibrary{calc}

\usepackage{amsmath,amssymb,amsfonts,amsthm}
\usepackage{mathabx}
\usepackage[ruled,linesnumbered,vlined]{algorithm2e}

\usepackage[pdftex]{hyperref}
\usepackage{cleveref}

\usepackage{fullpage}
\usepackage{enumitem}

\newcommand{\eps}{\varepsilon}

\newcommand{\ceil}[1]{\left\lceil{#1}\right\rceil}

\newcommand{\abs}[1]{\left | #1 \right |}

\newcommand{\set}[1]{\left \{ #1 \right \}}

\newtheorem{lemma}{Lemma}
\newtheorem{theorem}{Theorem}
\newtheorem{corollary}{Corollary}

\theoremstyle{definition}
\newtheorem{definition}{Definition}


\usepackage{authblk}
\title{New Subquadratic Approximation Algorithms for the Girth}

\author[1]{Søren Dahlgaard\thanks{Research partly supported by Mikkel Thorup's
Advanced Grant DFF-0602-02499B from the Danish Council for Independent Research
under the Sapere Aude research career programme.}}
\author[1]{Mathias Bæk Tejs Knudsen}
\author[1]{Morten Stöckel\thanks{Research partly supported by Villum Fonden.}}
\affil[1]{University of Copenhagen\\\texttt{[soerend,knudsen,most]@di.ku.dk}}
\date{}


\begin{document}

\maketitle
\begin{abstract}
    We consider the problem of approximating the girth, $g$, of an unweighted
    and undirected graph $G=(V,E)$ with $n$ nodes and $m$ edges. A seminal
    result of Itai and Rodeh~[SICOMP'78] gave an additive $1$-approximation in
    $O(n^2)$ time, and the main open question is thus how well we can do in
    subquadratic time.

    In this paper we present two main results. The first is a
    $(1+\eps,O(1))$-approximation in truly subquadratic time. Specifically, for
    any $k\ge 2$ our algorithm returns a cycle of length
    $2\ceil{g/2}+2\ceil{\frac{g}{2(k-1)}}$ in $\tilde{O}(n^{2-1/k})$ time. This
    generalizes the results of Lingas and Lundell [IPL'09] who showed it for
    the special case of $k=2$ and Roditty and Vassilevska Williams [SODA'12]
    who showed it for $k=3$. Our second result is to present an
    $O(1)$-approximation running in $O(n^{1+\eps})$ time for any $\eps > 0$.
    Prior to this work the fastest constant-factor approximation was the
    $\tilde{O}(n^{3/2})$ time $8/3$-approximation of Lingas and Lundell
    [IPL'09] using the algorithm corresponding to the special case $k=2$ of our
    first result.

\end{abstract}

\section{Introduction}
In this paper we consider the basic graph theoretical problem of computing the
shortest cycle of an unweighted and undirected graph. The length of this cycle,
$g$, is also known as the \emph{girth} of a graph.

Computing the girth of a graph has been studied since the 1970s. In a seminal
paper from 1978, Itai and Rodeh~\cite{ItaiR78} showed that the girth of an
$n$-node, $m$-edge graph can be computed in $O(n^\omega)$ time using fast
matrix multiplication, where $\omega < 2.373$ is the matrix multiplication
constant~\cite{Gall14a}. They also observed that running $n$ breadth first
searches gives an $O(mn)$ time \emph{combinatorial} algorithm for finding the
girth. Here \emph{combinatorial} means an algorithm that does not employ
Strassen-like cancellation tricks and arithmetic operations over some field.
Furthermore, it was shown by Vassilevska Williams and
Williams~\cite{WilliamsW10} that any combinatorial algorithm computing the
girth in $O(n^{3-\eps})$ for any $\eps > 0$ would imply a truly subcubic
algorithm (i.e. $O(n^{3-\eps})$ for some $\eps > 0$) for combinatorial boolean
matrix multiplication. Obtaining such an algorithm is widely conjectured to be
impossible.

This seeming barrier, combined with fast matrix multiplication being deemed
impractical, motivates the study of approximation algorithms for the girth,
$g$. In the paper of Itai and Rodeh~\cite{ItaiR78} they also presented an
algorithm computing a cycle of length at most $g+1$ in $O(n^2)$ time using a
simple BFS approach. In some sense this is an optimal approximation algorithm,
as the input may indeed be as large as $\Theta(n^2)$ and we cannot hope to get
a better approximation. However, we also know due to a classic result by
Bondy and Simonovits~\cite{BONDY197497} that any undirected graph with more
than $200n^{3/2}$ edges contains a $4$-cycle. Furthermore, we can find such a
cycle in $O(n)$ expected time using the subroutine of Yuster and
Zwick~\cite{YusterZ97} giving an $O(n)$ time algorithm returning a cycle of
length at most $g+1$ for graphs with many edges (since $g\ge 3$). It thus
remains interesting to obtain approximation algorithms with subquadratic
running time.

This problem was initially studied by Lingas and
Lundell~\cite{LingasL09} and later by Roditty and Vassilevska
Williams~\cite{RodittyW12}. In this paper we
say that an algorithm is a $(c_1,c_2)$-approximation to the girth for $c_1\ge
1, c_2\ge 0$, if the algorithm returns a cycle with length $\tilde{g}$ such
that $g\le \tilde{g}\le c_1\cdot g + c_2$. We say that a
$(c_1,0)$-approximation is a \emph{multiplicative} $c_1$-approximation (or just
$c_1$-approximation), and that a $(1,c_2)$-approximation is an \emph{additive}
$c_2$-approximation (or just $+c_2-approximation$).
Lingas and Lundell~\cite{LingasL09} initialized the study of subquadratic girth
approximation algorithms by giving a Monte Carlo $(2,2)$-approximation 
in expected time $O(n^{3/2}\sqrt{\log n})$. It is worth noting that this
algorithm represents an (at worst) $8/3$-approximation. They stated as an open
question whether a subquadratic time algorithm giving a multiplicative
approximation factor of two or better exists. This question was answered by
Roditty and Vassilevska Williams~\cite{RodittyW12} who presented several
subquadratic approximation algorithms for the girth. In particular they gave a
$2$-approximation in $O(n^{5/3}\log n)$ time. In fact, they
showed a more general result returning a cycle of length at most $2\ceil{g/2} +
2\ceil{g/4}$ (although they state it quite differently)\footnote{For a graph
with girth $g=4c-z$ for some $c\ge 1$ and $z\in\{0,1,2,3\}$ their algorithm
returns a cycle of length $6c-z$ for even $g$ and $6c-z+1$ for odd $g$. It can
be verified by inspection that this is indeed $2\ceil{g/2} + 2\ceil{g/4}$.} in
$O(n^{5/3}\log n)$.
This can be seen as an ``almost-but-not-quite'' $3/2$-approximation. They also
conjecture that obtaining a $(2-\eps)$-approximation requires
essentially quadratic time. Complementing this conjecture, they present a
randomized algorithm which beats this barrier for \emph{triangle-free graphs},
giving an $8/5$-approximation in $O(n^{1.968})$
which can be improved to $O(n^{1.942})$ using the result
of~\cite{dahlgaard2017evenCycles} as a lemma.
Finally, in the same paper, Roditty and Vassilevska Williams present an
additive $3$-approximation (additive $2$ for even $g$) in time $O(n^3/m\log^2
n)$.

\subsection{Our contribution}
In this paper we address the following two questions: ``How good an
approximation can we give for the girth in subquadratic time?'' and ``What is
the fastest running time of any constant-factor approximation algorithm to the
girth?''.

The conjecture of Roditty and Vassilevska Williams~\cite{RodittyW12} suggests
that we cannot hope to obtain a $(2-\eps)$-approximation faster
than $\Omega(n^2)$. However, we show that if we allow a small additive
error, we can an arbitrarily good multiplicative approximation in subquadratic
time -- that is, a $(1+\eps,O(1))$-approximation for any $\eps > 0$.
Specifically, we show the following theorem.
\begin{theorem}
	\label{thm:onePlusEps}
    Let $G$ be a graph with $n$ nodes and let $k\ge 2$ be any integer. Then
    there exists an algorithm that runs in time
	$O\!\left(n^{2-1/k}(\log n)^{1-1/k}\right)$ and finds
	a cycle with length at most $2\ceil{\frac{g}{2}} +
    2\ceil{\frac{g}{2(k-1)}}$, where $g$ is the girth of $G$.
\end{theorem}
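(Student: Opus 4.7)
The plan is to extend the sampling-plus-truncated-BFS paradigm of Lingas--Lundell and Roditty--Vassilevska Williams to arbitrary $k$. Write $h=\lceil g/2\rceil$ and $d=\lceil g/(2(k-1))\rceil$, and note that $(k-1)d\ge g/2\ge h-\tfrac{1}{2}$, so $(k-1)d\ge h$ by integrality. Choose a threshold $s$ of order $(n\log n)^{1/k}$ and form $S\subseteq V$ by sampling each vertex independently with probability $p=\Theta((\log n)/s)$. Then with high probability $|S|=\Theta((n\log n)^{1-1/k})$, and whp $S\cap B(v,d)\ne\emptyset$ for every $v$ with $|B(v,d)|\ge s$. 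Since $g$ is not known in advance, we run the algorithm for each of the $O(\log n)$ doubling guesses of $g$ and return the shortest cycle found across all runs.

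The algorithm has two phases. In \emph{Phase 1}, run a BFS from each $u\in S$ and record the shortest cycle closed by any non-tree edge of the BFS tree $T_u$. In \emph{Phase 2}, run a truncated BFS from each $v\in V$ that halts the moment $s^{k-1}$ vertices have been visited, again recording any cycle closed by a non-tree edge. The shortest cycle found in either phase is returned.

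For correctness, let $C$ be a shortest cycle and argue by dichotomy. If some $v^\star\in C$ satisfies $|B(v^\star,d)|\ge s$, then whp some $u\in S$ has $d(u,v^\star)\le d$. Tracking BFS distances along $C$ from $u$, namely $d_u(c_i)\le d+\min(i,g-i)$, and using the fact that $C$ cannot be a subtree of $T_u$, the shortest cycle closed by a non-tree edge of $C$ in $T_u$ has length at most $2\lceil g/2\rceil+2d$. Otherwise every $v\in C$ satisfies $|B(v,d)|<s$, and an inductive application of the ball-growth inclusion $B(v,jd)\subseteq\bigcup_{w\in B(v,(j-1)d)}B(w,d)$ yields $|B(v^\star,(k-1)d)|\le s^{k-1}$ for $v^\star\in C$. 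Since $(k-1)d\ge h$, the Phase 2 truncated BFS from $v^\star$ completes fully to depth $h$ and closes $C$, producing a cycle of length at most $g\le 2\lceil g/2\rceil+2d$.

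The main obstacles I foresee are twofold. First, the inductive ball-growth argument in Case 2 needs the intermediate vertices $w$ appearing in the unions to themselves have small $d$-balls, which is not immediate from the hypothesis $v\in C$; I plan to resolve this either by strengthening the case split to cover intermediate vertices (arguing that otherwise Case 1 would already fire through a sampled vertex at distance slightly larger than $d$ from $C$) or by running the induction strictly along $C$. Second, the Phase 1 cost $|S|\cdot O(n+m)$ is only acceptable when $m$ is small, so a Bondy--Simonovits preprocessing step is used: whenever $m$ exceeds $c n^{1+1/(k-1)}$, the graph contains a $2(k-1)$-cycle which already satisfies the output bound and can be found in $O(m)$ time, reducing us to the sparse regime. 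Combined with an edge-amortization argument in Phase 2 (a truncated BFS that scans more edges than its vertex budget has already closed a non-tree edge and hence a cycle to return), the balancing $s=\Theta((n\log n)^{1/k})$ gives total time $O(n^{2-1/k}(\log n)^{1-1/k})$, as claimed.
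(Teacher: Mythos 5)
Your two‑phase dichotomy is the right skeleton and your Phase~1 analysis is sound, but the gap you flag in Case~2 is genuine and your two proposed repairs do not close it. From the hypothesis that $\abs{B(v,d)}<s$ for every $v\in C$ one cannot conclude $\abs{B(v^\star,(k-1)d)}\le s^{k-1}$: attach to a single cycle vertex $v^\star$ a path of length exactly $d$ whose far endpoint $w$ is the root of a bounded‑degree expander of girth $>g$ and size $\gg s^{k-1}$. Every $v\in C$ still has $\abs{B(v,d)}=O(d^2)<s$, yet $B(v^\star,2d)\supseteq B(w,d)$ already blows past $s^2$, and the truncated BFS from $v^\star$ can exhaust its budget inside the expander before reaching the far side of $C$. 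Your first repair --- ``strengthen the case split to intermediate vertices'' --- does not rescue this: if the offending $w$ is at distance up to $(k-2)d$ from $C$, a sampled node near $w$ is at distance up to $(k-1)d\approx g/2$ from $C$, so Lemma~\ref{lem:bfs_cyc} only gives a cycle of length about $2\ceil{g/2}+g$, i.e.\ roughly a $2$‑approximation rather than the claimed $1+O(1/k)$. Your second repair --- ``run the induction strictly along $C$'' --- fails because $B(v^\star,(k-1)d)$ contains vertices whose shortest paths to $v^\star$ leave $C$ immediately, and those are not dominated by $d$‑balls centered on $C$.

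The device you are missing is the paper's incremental graph $H$. The paper defines $r(u)$ as the largest radius with $\abs{B_G(u,r(u))}\le x$, sorts vertices so that $r(u_1)\ge\cdots\ge r(u_n)$, and inserts them into $H$ one at a time (together with edges to already‑inserted neighbours), running $\FuncSty{BFS-Cycle}$ with budget $x^{k-1}$ after each insertion. Let $u$ be the last vertex of $C$ inserted; then $C\subseteq H_u$ and, crucially, \emph{every} vertex of $H_u$ has $r\ge r(u)\ge t_0$, so the ball‑growth induction $\abs{B_{H_u}(\cdot,jt_0)}\le x^j$ holds for all vertices of $H_u$, not just those on $C$. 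Restricting the BFS to $H_u$ is exactly what suppresses the expander in the counterexample above. Two further points worth noting: (i) since $r(u)$ is defined purely in terms of ball sizes and not in terms of $g$, the algorithm is oblivious to $g$ and your $O(\log n)$ doubling over guesses of $g$ is unnecessary (and would cost an extra $\log^{1/k} n$ factor beyond the stated bound); (ii) the Bondy--Simonovits preprocessing you propose is both insufficiently justified and unneeded --- a $C_{2(k-1)}$ can exceed the target length $2\ceil{g/2}+2\ceil{g/(2(k-1))}$ when $g$ is small and $k$ is large, and in any case $\FuncSty{BFS-Cycle}$ with budget $y$ already runs in $O(y)$ time regardless of $m$, because it halts at the first repeated vertex, so each scanned edge either discovers a new vertex or terminates the search.
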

Theorem~\ref{thm:onePlusEps} generalizes the result of Roditty and
Vassilevska Williams~\cite[Thm. 1.3]{RodittyW12}, who showed it for the special
case of $k=3$ (with an additional $\log^{1/3}n$ factor)
and the result of Lingas and Lundell~\cite[Thm. 1]{LingasL09} who showed a
Monte Carlo version for the special case of $k=2$.
As a corollary, we also get an improvement on the result for triangle-free
graphs \cite{RodittyW12,dahlgaard2017evenCycles} improving the running time
from $O(n^{1.942})$ to $O(n^{9/5}\log^{4/5} n)$ and
generalizing to families of graph with girth $g > 2\ell-1$ for any positive
integer $\ell$.
\begin{corollary}\label{cor:large_girth}
	Let $\ell$ be a positive integer and $k=2\ell-\ceil{\frac{\ell}{3}}+2$.
	There exists an algorithm that given a graph $G$ with $n$ nodes and girth $>2\ell-1$
	runs in time $O\!\left(n^{2-1/k}\right)$
	and gives a $(1+\frac{3}{2\ell+1})$-approximation of the girth.
\end{corollary}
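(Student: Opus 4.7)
The plan is to apply Theorem~\ref{thm:onePlusEps} with $k = 2\ell - \ceil{\ell/3} + 2$. The running time $O(n^{2-1/k})$ is then immediate from the theorem (the $(\log n)^{1-1/k}$ factor being absorbed), and the entire content of the corollary reduces to the arithmetic inequality
\[
    2\ceil{g/2} + 2\ceil{g/(2(k-1))} \;\le\; \left(1 + \frac{3}{2\ell+1}\right) g
\]
for every integer $g \ge 2\ell$, which is what the hypothesis ``girth $> 2\ell - 1$'' gives for integer-valued girth.

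I would approach the inequality by introducing $j := \ceil{g/(2(k-1))}$ and splitting on the parity of $g$. For odd $g$ we have $2\ceil{g/2} = g+1$, so the claim rearranges to $3g \ge (2\ell+1)(2j+1)$; for even $g$ the ``$+1$'' disappears and it suffices to have $3g \ge (2\ell+1)\cdot 2j$. In each parity, the worst $g$ with $\ceil{g/(2(k-1))} = j$ is the smallest one with that ceiling, namely $2(k-1)(j-1)+1$ (odd) or $2(k-1)(j-1)+2$ (even) for $j \ge 2$. Substituting $k-1 = 2\ell - \ceil{\ell/3} + 1$ turns the target into a concrete polynomial inequality in $\ell$ and $j$; a case split on $\ell \bmod 3$ produces the key quantity $3(k-1) - (2\ell+1) \in \{3\ell,\,3\ell+1,\,3\ell+2\}$, which is exactly enough slack to dominate $(2\ell+1)(2j+1)$ for all $j \ge 2$.

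The delicate place, and what I expect to be the only non-routine step, is the boundary $j=1$, $g = 2\ell$. The generic bound would require $g \ge 2\ell+1$ and therefore fails at this point. Here one uses that $g = 2\ell$ is even, so $2\ceil{g/2} = g$ exactly and the returned length is at most $2\ell + 2$; then $2\ell + 2 \le (1 + \tfrac{3}{2\ell+1})\cdot 2\ell$ reduces to $\ell \ge 1$. As a sanity check, the adjacent boundary $j=1$, $g = 2\ell+1$ is tight: the bound $2\ell+4$ equals $(1 + \tfrac{3}{2\ell+1})(2\ell+1)$ exactly, confirming that the choice of $k$ in the corollary is essentially the smallest one that makes this generic route succeed.
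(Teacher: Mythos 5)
Your proposal is correct and follows essentially the same route as the paper: apply Theorem~\ref{thm:onePlusEps} with $k=2\ell-\ceil{\ell/3}+2$, reduce everything to the arithmetic inequality $2\ceil{g/2}+2\ceil{g/(2(k-1))}\le(1+\frac{3}{2\ell+1})g$ for $g\ge 2\ell$, dispose of the boundary case $g=2\ell$ separately, and extract the needed slack from $\ceil{\ell/3}\le\frac{\ell+2}{3}$. The only (immaterial) difference is that the paper bounds $2\ceil{g/2}\le g+1$ uniformly and maximizes the resulting ratio over $q=\ceil{g/(2(k-1))}-1$, whereas you split on the parity of $g$ and minimize $g$ within each ceiling class -- the same worst-case identification, and your arithmetic checks out.
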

For completeness we show how Theorem~\ref{thm:onePlusEps} directly implies
Corollary~\ref{cor:large_girth} above in Appendix~\ref{app:proofs}.

Complementing Theorem~\ref{thm:onePlusEps}, we also show that for any constant
$\eps > 0$ it is possible to obtain a constant-factor approximation in
$O(n^{1+\eps})$ expected time. Specifically, we show the following theorem.
\begin{theorem}\label{thm:constant}
	Let $G$ be a graph with $n$ nodes and $k$ an integer $\ge 2$. There exists
    an algorithm giving a $2^k$-approximation to the girth of $G$ with
    probability $1-\frac{1}{n}$ running in expected time $O\!\left(n^{1+1/k}k
    \log n\right)$.
\end{theorem}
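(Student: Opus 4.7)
The plan combines a peeling-based dichotomy with the Moore bound and random BFS sampling.

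\textbf{Peeling dichotomy.}
Iteratively remove from $G$ every vertex whose current degree is below $n^{1/k}$, yielding a subgraph $H$ in $O(m)$ total time. Either $H$ is empty (in which case $G$ has degeneracy strictly less than $n^{1/k}$ and hence at most $n^{1+1/k}$ edges), or every vertex of $H$ has degree at least $n^{1/k}$ in $H$.

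\textbf{Dense subgraph case.}
If $H$ is non-empty, the Moore bound applied to $H$ gives that the girth of $H$ is at most $2k+O(1)$. I would run a BFS from an arbitrary vertex of $H$, stopping as soon as a non-tree edge is discovered; this yields a cycle of length $O(k)$ in $G$. Since $g \ge 3$ and $2k+O(1) \le 2^k \cdot 3 \le 2^k \cdot g$ for all $k \ge 2$, this cycle is a $2^k$-approximation. Keeping this BFS within $O(n^{1+1/k})$ time requires care with potentially high-degree vertices in $H$; the natural fix is to truncate neighbor-lists to roughly $n^{1/k}$ entries per vertex via random sampling, at the cost of boosting the success probability by $O(\log n)$ repetitions.

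\textbf{Sparse case.}
If $H$ is empty then $m = O(n^{1+1/k})$. I would sample $T = \Theta(k \log n)$ uniformly random vertices, run a complete BFS from each, and output the shortest cycle discovered. Each BFS costs $O(m)$, giving total time $O(n^{1+1/k} k \log n)$. Whenever the girth $g$ satisfies $g \ge n/(ck)$, the $T$ samples miss the girth cycle $C$ with probability at most $(1-g/n)^T \le 1/n$, so a BFS from a sampled vertex of $C$ returns $C$ exactly.

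\textbf{Expected main obstacle.}
The tricky subcase is a sparse graph whose girth is much smaller than $n/k$; then a $\Theta(k \log n)$-sample may miss $C$ entirely, yet any fast BFS from an arbitrary vertex might return a cycle far longer than $g$. For this regime, my plan is to instead run a bounded-depth BFS from \emph{every} vertex with an edge budget of $O(n^{1/k})$ per starting vertex (totalling $O(n^{1+1/k})$), and exploit the $2^k$ slack via a doubling argument across $k$ levels of BFS exploration---essentially a Thorup--Zwick style hierarchical sampling where each level doubles the allowed approximation factor but shrinks the sampled set by a factor of $n^{1/k}$. The hardest part of the proof is arguing that, even when no sample lies on $C$, a nearby sample plus bounded-depth exploration still returns a cycle of length at most $2^k \cdot g$.
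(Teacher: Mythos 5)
Your proposal correctly identifies both the algorithmic shape (hierarchical sampling across $k$ levels, with sample sizes shrinking by factors of $n^{1/k}$ and BFS budgets growing by $n^{1/k}$) and the crux of the correctness argument -- you write explicitly that ``the hardest part of the proof is arguing that, even when no sample lies on $C$, a nearby sample plus bounded-depth exploration still returns a cycle of length at most $2^k\cdot g$'' -- but you do not actually supply that argument. That argument is the entire content of the theorem; without it what you have is an algorithm sketch, not a proof. The paper resolves it by introducing a \emph{geometric ball-growing} quantity attached to the optimal cycle $C$: let $r$ be the smallest nonnegative integer with $\abs{B\bigl(C,\ceil{g/2}(2^r-1)\bigr)}\le n^{r/k}$. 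Such an $r\le k$ always exists and $r>0$. Minimality of $r$ forces $\abs{B(C,\ceil{g/2}(2^{r-1}-1))}>n^{(r-1)/k}$, so a sample of rate $\approx n^{(1-r)/k}\log n$ hits that inner ball w.h.p.\ at some node $u$. The triangle inequality then gives $B\bigl(u,\ceil{g/2}2^{r-1}\bigr)\subseteq B\bigl(C,\ceil{g/2}(2^r-1)\bigr)$, which has $\le n^{r/k}$ nodes and contains all of $C$, so a BFS-Cycle from $u$ with budget $n^{r/k}$ explores the whole ball and, by Lemma~\ref{lem:bfs_cyc}, returns a cycle of length $\le 2\ceil{g/2}+2\ceil{g/2}(2^{r-1}-1)=\ceil{g/2}2^r\le 2^k\ceil{g/2}$. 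This ``find the level where the ball size first falls below the threshold'' step, and the chained use of Lemma~\ref{lem:bfs_cyc} with distance $d(u,C)$, is precisely what your sketch is missing.

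Separately, the peeling dichotomy you front-load is both unnecessary and slightly problematic. It is unnecessary because the paper's argument handles all density regimes uniformly (when the graph is dense, $r$ is simply small and the low-level BFS with a tiny budget already finds a short cycle). It is slightly problematic because the peeling itself costs $\Theta(m)$, which can be $\Theta(n^2)$ and thus exceed the $O(n^{1+1/k}k\log n)$ budget; you would need a preliminary Bondy--Simonovits-style step to detect and handle $m\gg n^{1+1/k}$ before you could even afford to read the edge set, and you don't address this. In short: the algorithmic intuition is on target, the Moore-bound dense case is a sound if superfluous digression, but the core lemma -- the one you correctly flag as the hard part -- is absent, so the proof is incomplete.
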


Our algorithms are quite simple and rely on sampling and a ``stop-early'' BFS
procedure similar to previous work~\cite{ItaiR78,LingasL09,RodittyW12}. Using
techniques of~\cite{RodittyW12} it is possible to derandomize the sampling in
Theorem~\ref{thm:onePlusEps}, however this procedure is too slow and does not
work for Theorem~\ref{thm:constant}, hence this theorem is only available in a
randomized variant.

\subsection{Related work}
The problem of approximating the girth has also been considered in other
settings. For undirected graphs with weights in $\{1,\ldots, M\}$ Lingas and
Lundell~\cite{LingasL09} gave a $2$-approximation of the minimum
weight cycle running in $O(n^2\log n(\log n + \log M))$. Roditty and
Tov~\cite{RodittyT13} improved the approximation factor to $4/3$ while
maintaining the running time, and also gave a $O(\frac{1}{\eps}n^2\log
n\log\log n)$ time $(4/3+\eps)$-approximation for graphs with non-negative real
weights. Furthermore, it was shown by Roditty and Vassilevska
Williams~\cite{RodittyW11} that the problem (also for directed graphs) reduces
to finding a minimum weight triangle in an \emph{undirected} graph with $O(n)$
nodes and weights in $\{1,\ldots, O(M)\}$. For directed graphs a recent paper
by Pachocki, et al.~\cite{PachockiRSTW16} gave a $O(k\log
n)$-approximation in $O(mn^{1/k}\log^5 n)$ for any $k\ge 1$ and an additive
$O(n^\alpha)$-approximation in $O(mn^{1-\alpha})$ for any $\alpha\in (0,1)$.

Closely related to the problem of finding the girth of a graph is the problem
of finding a cycle of a fixed length $k$. For undirected graphs, Yuster and
Zwick~\cite{YusterZ97} showed that this can be done for even $k$ in
$O(f(k)\cdot n^2)$. Alon, Yuster and Zwick~\cite{AlonYZ97} showed that for
directed and undirected graphs this can be done in $O(f(k) \cdot m^{2-2/k})$ if
$k$ is even and $O(f(k) \cdot m^{2-2/(k+1)})$ if $k$ is odd. And for undirected
graphs when $k$ is even they give an algorithm with running time $O
(m^{2-(1+\ceil{k/4}^{-1})/(k/2+1)})$. For even $k$ this was improved
\cite{dahlgaard2017evenCycles} to $O(f(k) \cdot m^{2-4/(k+2)})$. This
problem of finding an even $k$-cycle was used as a subroutine by Roditty
and Vassilevska Williams~\cite{RodittyW12} in some of their algorithms for
approximating the girth.

\section{Preliminaries}
We will assume that all graphs $G=(V,E)$ in the paper are undirected,
unweighted, connected and contain at least one cycle. If $u$ is some node in
the graph $G$ and $\ell$ is an integer, we denote by $B_G(u,\ell)$ the
\emph{ball} of radius $\ell$ around $u$ in $G$, i.e.~the set of all nodes of
distance at most $\ell$ from $u$ in $G$. We will sometimes denote this simply
$B(u,\ell)$ when $G$ is clear from the context. We let $\Gamma(u)$ denote the
neighbourhood of $u$, i.e.~$\Gamma(u) = B(u,1) \setminus \set{u}$.
For a set of nodes $S\subseteq
V$, let $B(S,\ell)=\bigcup_{u\in S}B(u,\ell)$. We denote by $C_\ell$ the simple
cycle of length $\ell$ and let $\log x$ be the natural logarithm of $x$.

We will need the following lemma, which was used in a slightly weaker version
by Roditty and Vassilevska Williams~\cite{RodittyW12}
\begin{lemma}\label{lem:samples}
    Let $A_1,\ldots, A_k$ be sets over a universe $U$ of $n$ elements such that
    for every $i\le k$ we have $\abs{A_i}= x$ for some positive $x$. Then we
    can find a set $S\subseteq U$ with $|S|\le \frac{n}{x}\log k$ in $O(kx +
    n)$ time such that for each $1\le i\le k$ we have $S\cap A_i \ne
    \emptyset$.
\end{lemma}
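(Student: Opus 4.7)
The plan is to use the standard greedy hitting-set construction: repeatedly pick the element $u\in U$ lying in the largest number of currently unhit sets, add it to $S$, and mark every unhit $A_i\ni u$ as hit. Let $c[u]$ denote the number of currently unhit sets containing $u$, and let $m$ denote the current number of unhit sets. Since $\sum_{u\in U} c[u] = mx$, we have $\max_u c[u]\ge mx/n$, so each greedy step shrinks $m$ by a factor of at most $1 - x/n \le e^{-x/n}$. Consequently, after $t = \lceil \frac{n}{x}\log k\rceil$ iterations we have $m \le k(1-x/n)^t < k e^{-\log k} = 1$, hence $m=0$; absorbing the rounding (or terminating as soon as no element has $c[u]>0$) yields a hitting set of size at most $\frac{n}{x}\log k$ as claimed.

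For the $O(kx + n)$ running time, the plan is to maintain explicit counters $c[u]$ together with element-to-set incidence lists $I(u) = \{i : u \in A_i\}$, and to bucket the elements of $U$ by their current $c[u]$ value using doubly-linked lists with per-element back-pointers so that moving an element between two adjacent buckets runs in $O(1)$. Building the initial buckets, counters, and incidence lists takes $O(kx + n)$ time in a single pass over $A_1,\dots,A_k$. A global pointer tracks the highest nonempty bucket; since counters only ever decrease (and the selected element is removed entirely each round), this pointer only has to move downward, so its total movement work across the algorithm is $O(n)$. Each time an element $u$ is chosen, the algorithm walks $I(u)$, and for every still-unhit $A_i \ni u$ it scans the elements of $A_i$ to decrement their counters and slide them one bucket down. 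Every $A_i$ is scanned in this way at most once, in the iteration in which it becomes hit, so the total counter-maintenance work is $O(\sum_i |A_i|) = O(kx)$.

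The hard part will be the data-structural bookkeeping that makes the running time $O(kx + n)$ rather than the naive $O(nk)$: it requires the combination of (i) doubly-linked bucket lists with per-element back-pointers so that each $c[v]$ decrement truly runs in $O(1)$, and (ii) the monotone downward global pointer amortizing the cost of locating the max-count bucket to $O(n)$ over the whole execution. Once those two ingredients are in place, both the size bound and the time bound of the lemma follow directly from the analysis above.
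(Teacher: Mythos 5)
Your proof is correct and takes essentially the same route as the paper's: the greedy hitting-set rule, bucket linked lists with per-element back-pointers, and a monotonically downward-moving max pointer (the paper's phrase ``can never decrease'' is a typo for ``can never increase''; and the pointer traverses at most $k$ buckets, not $n$, though either bound is absorbed into $O(kx+n)$). The only substantive difference is that you spell out the standard $(1-x/n)^t$ shrinkage argument for the size bound, whereas the paper defers that calculation to Roditty and Vassilevska Williams.
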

\begin{proof}
    Let $t(u) = \abs{\{i\mid u\in A_i\}}$. We start by computing this quantity
    for each $u\in U$ in $O(kx+n)$ time by simply traversing all the sets
    $A_i$. We now keep $k$ linked lists $L_1,\ldots, L_k$, where $L_i$ contains
    the elements $u$ such that $t(u) = i$. In addition, we keep a table $T$
    where $T[u]$ contains a pointer to the node in $L_1,\ldots, L_k$ containing
    $u$.
    
    The algorithm now works as follows. Let $S = \emptyset$. Repeatedly find an
    element $u$ with maximum $t(u)$, add it to $S$, and for each remaining
    $A_i$ such that $u\in A_i$, we remove $A_i$ and for each $v\in A_i$ update
    $t(v)$ to $t(v)-1$ and move it from $L_{t(v)}$ to $L_{t(v)-1}$. We note
    that we may update an element $v$ several times, and that if $t(v)$ becomes
    $0$ we simply ignore $v$ for the remainder of the algorithm.

    Since the maximum value $t(u)$ can never decrease we can keep track of the
    maximum value in total $O(k)$ time over all iterations. Furthermore, all
    updates take $O(kx)$ time using the lists $L_1,\ldots, L_k$ and our table
    $T$. The analysis of the size of $S$ now follows exactly as
    in~\cite{RodittyW12}.
\end{proof}

Similar to previous papers~\cite{ItaiR78,RodittyW12,LingasL09} we will use a
procedure called $\FuncSty{BFS-Cycle}(G,u)$, which is simply the algorithm
which runs a BFS from $u$ in $G$ until a node $v$ is visited twice. In this
case the algorithm returns the simple cycle containing $v$ in the BFS tree
including the last edge visited. We will need the following well-known lemma
from the literature.
\begin{lemma}\label{lem:bfs_cyc}[\cite{ItaiR78,LingasL09}]
    Let $G = (V,E)$ and let $u\in V$ be any vertex. Then running $\FuncSty{BFS-Cycle}(G,u)$ takes
    $O(n)$ time. Furthermore, if $v$ is a vertex at distance $\ell$ from $u$
    and $v$ is contained in a simple cycle of length $k$, then
    $\FuncSty{BFS-Cycle}(G,u)$ returns a cycle of length at most
    $2\ceil{k/2} + 2\ell$.
\end{lemma}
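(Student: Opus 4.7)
The plan is to handle the two parts of the lemma separately.

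For the $O(n)$ running time, I observe that $\FuncSty{BFS-Cycle}$ halts the instant the BFS discovers an edge whose far endpoint is an already-visited non-parent. Before halting, every neighbour scan either introduces a new tree edge (at most $n-1$ such events across the whole run) or skips the scanned vertex's parent pointer (at most once per processed vertex). With adjacency lists, the total number of edge inspections is therefore $O(n)$, plus one final inspection triggering the halt.

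For the cycle-length bound, write $C = (v_0 = v, v_1, \dots, v_{k-1})$ and set $d_i := d_G(u, v_i)$. The triangle inequality along $C$ gives $d_i \le \ell + \min(i, k-i) \le \ell + \lfloor k/2 \rfloor$, and $|d_i - d_{i+1}| \le 1$. I will use the following general BFS-Cycle fact: if the procedure halts while processing a vertex $x$ at BFS-depth $d$, the closing edge $(x, y)$ satisfies $d(y) \in \{d-1, d, d+1\}$, so the returned cycle — the two BFS-tree paths from $u$ to $x$ and to $y$, minus their common prefix, plus the closing edge — has length at most $2d + 2$. Because BFS processes vertices in nondecreasing depth order, it suffices to exhibit \emph{any} non-tree edge $e^* = (a, b)$ with $\min(d_a, d_b) \le \ell + \ceil{k/2} - 1$: if $\FuncSty{BFS-Cycle}$ has not halted earlier, it halts while processing the lower-depth endpoint of $e^*$, and an earlier halt only lowers the processing depth.

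To produce such an $e^*$, I split on whether $C$ contains a \emph{level} edge, i.e.\ one with $d_i = d_{i+1}$. Case (i): a level edge exists. Then it is a cross edge of the BFS tree, hence non-tree, and its common depth $D$ satisfies $D \le \ell + \min(i, k-i-1) \le \ell + \lfloor (k-1)/2 \rfloor = \ell + \ceil{k/2} - 1$. Case (ii): no level edges. Then every step along $C$ changes depth by exactly $\pm 1$, forcing $k$ to be even; letting $D^* := \max_i d_i \le \ell + k/2$ and picking any $v_j$ on $C$ with $d_j = D^*$, both cycle-neighbours of $v_j$ sit at depth $D^* - 1$, yet $v_j$ has only one BFS parent, so at least one of the two cycle-edges incident to $v_j$ is non-tree, with lower endpoint at depth $D^* - 1 \le \ell + \ceil{k/2} - 1$.

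In both cases the returned cycle therefore has length at most $2(\ell + \ceil{k/2} - 1) + 2 = 2\ceil{k/2} + 2\ell$. The main subtlety is justifying the ``monotone-in-depth'' reduction that lets the argument ignore non-tree edges outside $C$: this rests on BFS's depth-ordered processing combined with the $2d+2$ bound, which guarantees that any halt occurring before $e^*$ is examined cannot give a longer cycle than the one witnessed by $e^*$.
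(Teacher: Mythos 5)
The paper cites this lemma from [ItaiR78, LingasL09] without providing its own proof, so there is no in-paper argument to compare against; I evaluate your proof on its own. Your proof is correct. The $O(n)$ accounting is standard. For the cycle-length bound, the triangle-inequality observation $d_i \le \ell + \min(i,k-i)$ together with the two-case split on whether $C$ contains a same-depth edge is essentially the classical Itai–Rodeh argument (stated there for $\ell=0$, i.e.\ $u$ on the cycle), cleanly generalized. The ``monotone-in-depth'' reduction you flag is sound: because BFS dequeues vertices in nondecreasing depth and the halt at depth $d$ closes a cycle of length at most $2d+2$, it suffices to show that some non-tree edge $e^*$ on $C$ has lower endpoint at depth at most $\ell+\lceil k/2\rceil-1$, since the non-tree property guarantees the far endpoint is already discovered when the lower endpoint is processed, forcing a halt at or before that depth. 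Both of your cases produce such an $e^*$ correctly: the level-edge case gives $D\le\ell+\min(i,k-i-1)\le\ell+\lfloor(k-1)/2\rfloor=\ell+\lceil k/2\rceil-1$, and the level-edge-free case forces $k$ even, a max-depth vertex $v_j$ with both cycle-neighbours at $D^*-1$, and a non-tree incident cycle edge since $v_j$ has one parent, giving lower endpoint at $D^*-1\le\ell+k/2-1$. The final arithmetic $2(\ell+\lceil k/2\rceil-1)+2=2\lceil k/2\rceil+2\ell$ matches the stated bound.
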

Sometimes we will need to run a restricted version of $\FuncSty{BFS-Cycle}$
which stops after visiting a certain number of nodes (or if a cycle was found
before). Let this procedure be denoted by $\FuncSty{BFS-Cycle}(G,u,y)$, where
$y$ is the bound on the number of visited nodes. It is clear that
$\FuncSty{BFS-Cycle}(G,u,y)$ takes at most $O(y)$ time if we assume an
adjacency list representation.

We will need the following algorithm, which samples a set $S$ and runs
$\FuncSty{BFS-Cycle}$ from each $u\in S$.
\begin{definition}\label{defn:algo}
    Let $A(G,x,y)$ be an algorithm that takes an $n$-node graph $G=(V,E)$ and
    two positive numbers $x,y$. The algorithm creates a set $S\subseteq V$ of
    nodes by sampling each node of $V$ independently with probability
    $\min\left\{\frac{x}{n},1\right\}$. Clearly $S$ has expected size $x$. The
    algorithm then runs $\FuncSty{BFS-Cycle}(G,u,y)$ for each $u\in S$ and
    returns the smallest cycle found or nothing if no cycle is found.
\end{definition}
We will also need the following deterministic variant, where the set $S$ is
picked using Lemma~\ref{lem:samples}.
\begin{definition}\label{defn:algo_det}
    Let $A_{det}(G,x,y)$ be the same algorithm as $A(G,x,y)$ with the following
    modification to how the set $S$ is picked. Fix $\tilde{x} = \frac{n\log
    n}{x}$ and denote the nodes of $G$ by $u_1,\ldots,u_n$. The algorithm first
    creates the sets $A_1,\ldots, A_n$, where $A_i$ is the set containing the
    $\tilde{x}$ closest nodes to $u_i$ (breaking ties arbitrarily, eg. by the
    order in the adjacency lists). It then creates the set $S\subseteq V$ by
    running the algorithm of Lemma~\ref{lem:samples} on $A_1,\ldots, A_n$.
\end{definition}
It is easy to see that $A(G,x,y)$ runs in $O(n + xy)$ expected time and that
$A_{det}(G,x,y)$ runs in $O\!\left(\frac{n^2\log n}{x} + xy\right)$ time.

\section{Subquadratic girth approximations}
\label{sec:subquad}
In this section we present our algorithm for obtaining subquadratic
approximations of the girth. In particular we will prove
Theorem~\ref{thm:onePlusEps}.

\begin{proof}[Proof of Theorem~\ref{thm:onePlusEps}]
    Let $x$ be some parameter to be defined later. We run two algorithms and
    take the smallest cycle produced by the two algorithms. The first algorithm
    is simply $A_{det}\!\left(G,\frac{n\log n}{x},n\right)$ of
    Definition~\ref{defn:algo}. The second algorithm is the following. For each
    node $u\in V$ we calculate the largest integer $r(u)$ such that
    $\abs{B(u,r(u))} \le x$. Let $u_1,\ldots, u_n$ denote the nodes of $G$ in
    sorted order such that $r(u_1)\ge\ldots\ge r(u_n)$.
    We initialize an empty graph $H$. For each $i = 1,2,\ldots,n$ in order such
    that $r(u_i) > 0$ we add $u_i$ to $H$ along with edges
    $\set{u_1,\ldots,u_{i-1}} \cap \Gamma(u_i)$ and run
    $\FuncSty{BFS-Cycle}(H,u_i,\frac{n\log n}{x})$ keeping track of the
    smallest cycle seen this far. The output of our algorithm is the smallest
    cycle produced by any of the above steps.

    It remains to prove that the algorithm returns a cycle of length at most
    $2\ceil{g/2} + 2\ceil{\frac{g}{2(k-1)}}$. Let $C$ be a cycle of
    length $g$ in $G$ and let $t = \min_{u \in C} \set{r(u)}$. We know that the
    set $S$ produced by
    $A_{det}(G,\frac{n\log n}{x},n)$ contains 
    a node with distance at most $t+1$ to $C$ and therefore, by
    Lemma~\ref{lem:bfs_cyc}, this algorithm
    returns a cycle with length at most $2\ceil{g/2}+2(t+1)$.
    It follows that the call to $A_{det}(G,\frac{n\log n}{x}, n)$ suffices for
    $t < \ceil{\frac{g}{2(k-1)}}$. Now, assume that $t \ge t_0$, where $t_0
    = \ceil{\frac{g}{2(k-1)}}$ and let $u$ be the last node of $C$ that was
    added to $H$. 
    Let $H_u$ denote the graph $H$ after adding $u$. It follows by the
    definition of $u$ that the cycle $C$ is contained in $H_u$. Furthermore,
    for each $v\in H_u$ we have $r(v)\ge r(u)\ge t_0$ and thus
    $\abs{B_{H_u}(v,t_0)} \le x$. This implies that
    $\abs{B_{H_u}(v,(k-1)t_0)} \le x^{k-1}$ for each $v \in H_u$
    and since $(k-1)t_0 \ge \ceil{g/2}$ by assumption, this implies that
    $\abs{B_{H_u}(u,\ceil{g/2})} \le x^{k-1}$. We can now use
    Lemma~\ref{lem:bfs_cyc} to conclude that the call to
    $\FuncSty{BFS-Cycle}(H_u,u,x^{k-1})$ returns a cycle of length at
    most $2\ceil{g/2}$.
    
    For the running time of the algorithm observe first that we can find $r(u)$
    for each $u$ in $O(x)$ time using a BFS. Furthermore, since we only
    add nodes $u_i$ to $H$ when $r(u_i)>0$ it follows that we only consider at
    most $O(nx)$ edges for addition to $H$. The running time is now
    bounded by $O\!\left(n\cdot \frac{n\log n}{x} + nx^{k-1}\right)$,
    which is minimized when $x = (n\log n)^{1/k}$ giving a total running time
    of $O(n^{2-1/k}(\log n)^{1-1/k})$.
\end{proof}

\section{Constant approximation in near-linear time}\label{sec:nearlinear}

%

We will now show how to obtain a constant factor approximation randomized in
near-linear time.

\begin{proof}[Proof of Theorem~\ref{thm:constant}]
    The algorithm is very straight-forward: For each $i=1,\ldots,k$ we run
    $A(G,n^{1+(1-i)/k}\log n, n^{i/k})$ and return the minimal cycle found or
    nothing if no cycle is reported.

    We now analyze this procedure. Let $g$ be the girth of $G$ and $C$ a cycle
    in $G$ with length $g$. Let $r$ be the smallest non-negative integer such
    that
    \[
        \abs{B\left(C,\ceil{g/2} \cdot (2^r-1)\right)} \le n^{r/k}\ .
    \]
    Clearly such an $r$ exists, as we may pick $r=k$. Also observe that $r > 0$
    since, in particular, the ball with $r=0$ contains the cycle $C$
    itself. We will show that the
    $r$th iteration of the algorithm gives a sufficiently small cycle.
    Consider the algorithm $A(G,n^{1+(1-r)/k}\log n, n^{r/k})$ and let $S$
    be the set of nodes sampled by this algorithm. We will show
    that there exists a node $u\in S$ close to $C$, such that
    $C$ is contained in the tree explored by
    $\FuncSty{BFS-Cycle}(G,u,n^{r/k})$.
    Consider the slightly smaller ball $B\left(C,\ceil{g/2} \cdot
    (2^{r-1}-1)\right)$. By the minimality of $r$, the number of nodes in
    $S$ belonging to this ball in expectation is at least
    \begin{align*}
        \abs{B\left(C,\ceil{g/2} \cdot (2^{r-1}-1)\right)} \cdot n^{(1-r)/k}\log n
        \ge 
        n^{(r-1)/k}\cdot n^{(1-r)/k}\log n
        =
        \log n
        \ .
    \end{align*}
    Therefore the probability that no node from $B\left(C,\ceil{g/2} \cdot
    (2^{r-1}-1)\right)$ is sampled is at most $\frac{1}{n}$. We thus assume
    that such a sampled node, $u$, exists. We now argue that the BFS search
    starting in $u$ gives the desired cycle. This is illustrated in
    Figure~\ref{fig:constant}.
    \begin{figure}[htbp]
        \centering
        \includegraphics[width=.5\textwidth]{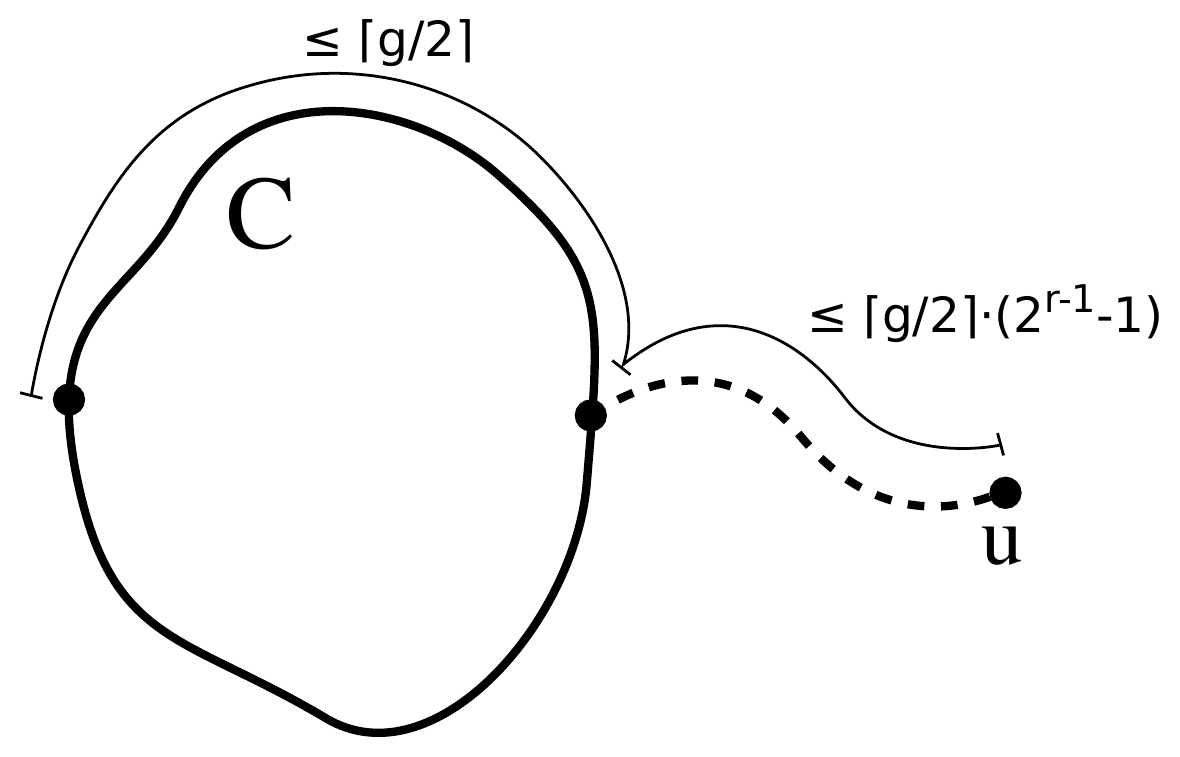}
        \caption{Illustration of the proof of Theorem~\ref{thm:constant}. $u$
        is a sampled node, and running $\FuncSty{BFS-Cycle}$ from $u$ visiting
        at most $n^{r/k}$ gives the desired approximation.}
        \label{fig:constant}
    \end{figure}
    Since $u$ is contained in $B\left(C,\ceil{g/2}\cdot(2^{r-1}-1)\right)$ we
    know that $B(u,\ceil{g/2}2^{r-1})\subseteq
    B\left(C,\ceil{g/2}\cdot(2^r-1)\right)$ and thus, by the definition of $r$,
    we know that $\abs{B(u,\ceil{g/2}2^{r-1})} \le n^{r/k}$.
    Furthermore, this ball around $u$ also contains the cycle $C$ and thus
    $A(G,n^{1+(1-r)/k}\log n, n^{r/k})$ 
    returns a cycle of length at most $2 \cdot \ceil{g/2} \cdot 2^{r-1}$. Since
    $r \le k$ the length of the cycle returned is at most $2^k\ceil{g/2}$.

    Since we invoke that algorithm $A$ exactly $k$ times and each invocation
    takes $O(n^{1+1/k}\log n)$ in expectation the running time follows.
\end{proof}

We note that we cannot employ the algorithm $A_{det}$ instead in the algorithm
above, as the task of creating a ball around each node $u\in V$ in order to
employ Lemma~\ref{lem:samples} takes too long. This is the main bottle-neck in
obtaining a deterministic variant of Theorem~\ref{thm:constant}.

\section{Conclusion and open problems}
In this paper we have studied the problem of obtaining subquadratic
approximation algorithms for the girth of an undirected and unweighted graph.
We have shown how to obtain a multiplicative $(1+\eps)$-approximation with
small additive error in subquadratic time, and an $O(1)$-approximation in
$O(n^{1+\eps})$ time for any $\eps > 0$. It remains as the main open question
whether one can obtain a multiplicative $(2-\eps)$-approximation in
$O(n^{2-\eps})$ time or perhaps show that obtaining such an approximation
requires $n^{2-o(1)}$ using the framework of Hardness in P. Another interesting
question is whether one can improve on our Theorem~\ref{thm:constant} and
obtain a multiplicative $O(1)$-approximation in $n^{1+o(1)}$ time or an
additive $O(1)$-approximation in $O(n^{2-\eps})$ time. Finally, it is an
interesting question whether one can improve on Corollary~\ref{cor:large_girth}
and obtain a multiplicative $(1+\frac{3}{2\ell+1}-\eps)$-approximation in
subquadratic time for graphs with girth $> 2\ell - 1$.

%

\bibliography{girth}

\newpage
\appendix
\centerline{\Huge Appendix}
\section{Omitted proofs}\label{app:proofs}

\begin{proof}[Proof of Corollary~\ref{cor:large_girth}]
We use Theorem \ref{thm:onePlusEps} with $k=2\ell-\ceil{\frac{\ell}{3}}+2$.
In order to prove that we get a $\left(1+\frac{3}{2\ell+1}\right)$-approximation
we just need to prove that
\begin{align}
	\label{eq:approxRatio}
	\frac{2\ceil{\frac{g}{2}} + 2\ceil{\frac{g}{2(k-1)}}}{g} \le 
	1 + \frac{3}{2\ell+1}
	\, ,
\end{align}
whenever $g \ge 2\ell$. For $g = 2\ell$ \eqref{eq:approxRatio} holds, so assume
that $g \ge 2\ell+1$.

For $2\ell+1 \le g \le 2(k-1)$ we note that $2\ceil{\frac{g}{2}} \le g+1$ and
$2\ceil{\frac{g}{2(k-1)}} = 2$. Therefore:
\begin{align*}
	\frac{2\ceil{\frac{g}{2}} + 2\ceil{\frac{g}{2(k-1)}}}{g} \le 
	\frac{g+3}{g} = 
	1 + \frac{3}{g} \le 
	1 + \frac{3}{2\ell+1}
	\, ,
\end{align*}
and \eqref{eq:approxRatio} holds. So assume that $g > 2(k-1)$.
Then we can write $g$ as $g = 2(k-1)q + r$ for some positive integer $q$ and some
$r \in \set{1,2,\ldots,2(k-1)}$. We again use that $2\ceil{\frac{g}{2}} \le g+1$
and see that $\ceil{\frac{g}{2(k-1)}} = q+1$. So we get that:
\begin{align}
	\label{eq:approxRatioGLarge}
	\frac{2\ceil{\frac{g}{2}} + 2\ceil{\frac{g}{2(k-1)}}}{g} \le 
	\frac{g+1+2q+2}{g} =
	1 + \frac{3+2q}{g} \le 
	1 + \frac{3+2q}{2(k-1)q+1}
	\, .
\end{align}
The right hand side of \eqref{eq:approxRatioGLarge} is maximized when $q=1$, and therefore
we get:
\begin{align*}
	\frac{2\ceil{\frac{g}{2}} + 2\ceil{\frac{g}{2(k-1)}}}{g} \le 
	1 + \frac{5}{2k-1}
	\, ,
\end{align*}
and therefore we just need to prove that $\frac{5}{2k-1} \le \frac{3}{2\ell+1}$. This
can be rewritten as
\begin{align*}
	k \ge \frac{5\ell+4}{3}
	\, .
\end{align*}
But we have that $\ceil{\frac{\ell}{3}} \le \frac{\ell+2}{3}$ and therefore
\begin{align*}
	k =
	2\ell - \ceil{\frac{\ell}{3}} + 2 \ge 
	2\ell - \frac{\ell+2}{3} + 2 = 
	\frac{5\ell+4}{3}
	\, ,
\end{align*}
as desired.
\end{proof}

\end{document}